\newtheorem{theorem}{Theorem}[section]
\newtheorem{remark}[theorem]{Remark}
\newtheorem{proposition}[theorem]{Proposition}
\newcommand{\dx}{\, \mbox{\rm d}}
\newcommand{\Sh}{\mbox{\rm Sh}}
\begin{document}
\title[Smearing of Observables and Spectral Measures]{Smearing of Observables and Spectral Measures on Quantum Structures}
\author[Anatolij Dvure\v censkij]{Anatolij Dvure\v censkij$^{1,2}$}
\date{}
\maketitle
\begin{center}  \footnote{Keywords: Effect algebra, observable, smearing of observales, monotone
$\sigma$-completeness, state, Loomis-Sikorski theorem, effect-tribe,
Riesz decomposition property, spectral measure

 AMS classification: 81P15, 03G12, 03B50

The paper has been supported by Center of Excellence SAS -~Quantum
Technologies~-,  ERDF OP R\&D Project
meta-QUTE ITMS 26240120022, the grant VEGA No. 2/0059/12 SAV and by
CZ.1.07/2.3.00/20.0051. }
Mathematical Institute,  Slovak Academy of Sciences,\\
\v Stef\'anikova 49, SK-814 73 Bratislava, Slovakia\\
$^2$ Depar. Algebra  Geom.,  Palack\'{y} Univer.\\
CZ-771 46 Olomouc, Czech Republic\\

E-mail: {\tt
dvurecen@mat.savba.sk}
\end{center}

\begin{abstract} An observable on a quantum structure is any $\sigma$-homomorphism of quantum structures from the Borel $\sigma$-algebra of the real line into the quantum structure which is in our case a monotone $\sigma$-complete effect algebras with the Riesz Decomposition Property. We show that every observable is a smearing of a sharp observable which takes values from a Boolean $\sigma$-subalgebra of the effect algebra, and we prove that for every element of the effect algebra there is its spectral measure.
\end{abstract}

\section{Introduction}

D-posets introduced by K\^opka and Chovanec \cite{KoCh} and effect algebras introduced by Foulis and Bennet \cite{FoBe}  became the last two decades  very important quantum structures which model quantum mechanical events. Both structures are partial algebraic structures. Subtraction of two comparable events is a basic notion for D-posets, and addition of two mutually excluding events is a basic one for effect algebras. We recall that both structures are equivalent as it was mentioned in \cite{FoBe}. In our paper we will deal only with effect algebras.

We note that a prototypical example of effect algebras, important mainly  for measurements in Hilbert space quantum mechanics,  is the system $\mathcal E(H)$ of all Hermitian operators of a (real, complex or quaternionic)  Hilbert space $H$ that lie between the zero and the identity operator. The system $\mathcal E(H)$ is used for modeling unsharp observables via POV-measures (= positive operator valued measure) in measurements in  quantum mechanics.

To describe a measurement on a quantum structure, we use the notion of an observable. This is an analogue of a random variable in a classical measurement. In our case it is simply a $\sigma$-homomorphism of effect algebras from the Borel $\sigma$-algebra $\mathcal B(\mathbb R)$ of the real line $\mathbb R$ into the quantum structure.

Observables as an important tool of quantum structures are intensively studied by many authors. A functional calculus of observables on D-posets is presented in \cite{KoCh}. The series of papers \cite{Pul, JPV, JPV1} is dedicated  to observables studied on lattice effect algebras and $\sigma$-MV-algebras exhibiting spectral properties and smearing of fuzzy observables by sharp observables using a kind of a Markov kernel. In \cite{DvKu}, it was shown that in many important structures, a partial information on an observable known only on intervals of the form $(-\infty,t),$ $t \in \mathbb R,$ is sufficient to derive the whole information on the observable.

The main tool in our research will be applications of the Loomis-Sikorski Theorem for monotone $\sigma$-complete effect algebras with the Riesz Decomposition Property (RDP) proved in \cite{BCD}. This Theorem says that our structure is a $\sigma$-homomorphic image of a monotone $\sigma$-complete effect algebra of fuzzy sets where all algebraic operations are defined by points.  This generalizes analogous results proved for a special case of monotone $\sigma$-complete effect algebras, called $\sigma$-complete MV-algebras, see \cite{Mun, Dvu1}. We recall that RDP is a special type of distributivity which in our case means the possibility of performing a joint refinement of two decompositions. It has an important consequence that an effect algebra with RDP is always an interval in a partially ordered group  with strong unit (= order unit), see \cite{Rav}.

The present paper is inspired by the research in \cite{Pul}. We have two aims. First,  we show that every observable of a monotone $\sigma$-complete effect algebra $M$ with RDP is a smearing of a sharp observable, where a sharp observable means that its values are in the biggest Boolean $\sigma$-subalgebra of $M.$ Second, we show that every element $a$ of $M$ admits a spectral measure $\Lambda_a,$  which is a sharp observable concentrated on the real interval $[0,1].$ In the language of spectral theory of self-adjoint operators, $\Lambda_a$ is the spectral measure of the element $a.$ Analogous questions were inspected also in \cite{Pul} for $\sigma$-complete MV-algebras and unital Dedekind $\sigma$-complete $\ell$-groups \cite{Pul1}.

The paper is organized as follows. Section 2 is gathering necessary notions on effect algebras. Section 3 is studying a canonical representation as well as  regular representations of monotone $\sigma$-complete effect algebras with RDP. Finally, Section 4 presents the main results on smearing of observables by sharp ones, and spectral measures of elements are established.

\section{Basic Notions of Effect Algebras}

We recall that according to \cite{FoBe}, an {\it effect algebra} is  a partial algebra $M =
(M;+,0,1)$ with a partially defined operation $+$ and two constant
elements $0$ and $1$  such that, for all $a,b,c \in M$,
\begin{enumerate}

\item[(i)] $a+b$ is defined in $M$ if and only if $b+a$ is defined, and in
such a case $a+b = b+a;$

 \item[(ii)] $a+b$ and $(a+b)+c$ are defined if and
only if $b+c$ and $a+(b+c)$ are defined, and in such a case $(a+b)+c
= a+(b+c);$

 \item[(iii)] for any $a \in M$, there exists a unique
element $a' \in M$ such that $a+a'=1;$

 \item[(iv)] if $a+1$ is defined in $M$, then $a=0.$
\end{enumerate}

If we define $a \le b$ if and only if there exists an element $c \in
M$ such that $a+c = b$, then $\le$ is a partial ordering on $M$, and
we write $c:=b-a;$ then $a' = 1 - a$ for any $a \in M.$ As a basic source of information about effect algebras we can recommend the monograph \cite{DvPu}. An effect algebra is not necessarily a lattice. We recall that a {\it homomorphism} is any mapping of two effect algebras which preserves $1$ and the addition $+.$

We show two kinds of important effect algebras. (1) If $M$ is a system of fuzzy sets on $\Omega,$ that is $M \subseteq [0,1]^\Omega,$ such that
(i) $1 \in M$, (ii) $f \in M$ implies $1-f \in M$, and (iii) if $f,g
\in M$ and $f(\omega) \le 1 -g(\omega)$ for any $\omega \in \Omega$,
then $f+g \in M,$ then $M$ is an effect algebra of fuzzy sets which
is not necessarily a Boolean algebra as well as not a lattice. (2) If $G$ is a partially ordered group written additively, $u \in G^+$, then $\Gamma(G,u):=[0,u]=\{g \in G: 0 \le g \le u\}$ is an effect algebra with $0=0,$ $1=u$ and $+$ is the group addition of elements if it exists in $\Gamma(G,u).$

We say that an effect algebra $M$ satisfies the Riesz Decomposition Property (RDP for short) if for all $a_1,a_2,b_1,b_2 \in M$ such that $a_1 + a_2 = b_1+b_2,$ there are four elements $c_{11},c_{12},c_{21},c_{22}$ such that $a_1 = c_{11}+c_{12},$ $a_2= c_{21}+c_{22},$ $b_1= c_{11} + c_{21}$ and $b_2= c_{12}+c_{22}.$

We note that an element of an effect algebra $M$ is said to be {\it sharp} if $a \wedge a'$ exists in $M$ and $a\wedge a'=0.$ Let $\Sh(M)$ be the set of sharp elements of $M.$ Then (i) $0,1\in \Sh(M),$ (ii) if $a \in \Sh(M),$ then $a'\in \Sh(M).$ If $M$ is a lattice effect algebra, then $\Sh(M)$ is an orthomodular lattice which is a subalgebra and a sublattice of $M,$  \cite{JeRi}. If an effect algebra $M$ satisfies RDP, then by \cite[Thm 3.2]{Dvu2}, $\Sh(M)$ is even a Boolean algebra, and an element $a$ is sharp iff $a\wedge a'$ is defined in $M$ and $a\wedge a'=0.$

An effect algebra $M$ is {\it monotone} $\sigma$-{\it complete} if, for any sequence $a_1 \le a_2\le \cdots,$ the element $a = \bigvee_n a_n$  is defined in $M$ (we write $\{a_n\}\nearrow a$). We recall that a mapping $x: \mathcal B(\mathbb R) \to M$ is said to be an {\it observable} on $M$ if (i) $x(\mathbb R)=1,$ (ii) if $E$ and $F$ are mutually disjoint Borel sets, then $x(E \cup F)=x(E)+x(F),$ where $+$ is the partial addition on $M,$ and (iii) if $\{E_i\}$ is a sequence of Borel sets such that $E_i \subseteq E_{i+1}$ for every $i$ and $E= \bigcup_i E_i,$ then $x(E) = \bigvee_i x(E_i).$ In other words, an observable is a $\sigma$-homomorphism of effect algebras.

An {\it effect-tribe}  is any system ${\mathcal T}$ of fuzzy sets on
$\Omega\ne \emptyset $ such that (i) $1 \in {\mathcal T}$, (ii) if $f
\in {\mathcal T},$ then $1-f \in {\mathcal T}$, (iii) if $f,g \in {\mathcal T}$,
$f \le 1-g$, then $f+g \in {\mathcal T},$ and (iv) for any sequence
$\{f_n\}$ of elements of ${\mathcal T}$ such that $f_n \nearrow f$
(pointwise), then $f \in {\mathcal T}$. It is evident that any
effect-tribe is a monotone $\sigma$-complete effect algebra. We recall that e.g. $\mathcal E(H)$ can be represented as an effect-tribe, but RDP fails for it.

A very important subclass of effect algebras is the class of
MV-algebras introduced by Chang \cite{Cha}.

We recall that an MV-algebra is an algebra $M = (M;\oplus, ^*,0,1)$
of type (2,1,0,0) such that, for all $a,b,c \in M$, we have

\begin{enumerate}
\item[(i)]  $a \oplus  b = b \oplus a$;
\item[(ii)] $(a\oplus b)\oplus c = a \oplus (b \oplus c)$;
\item[(iii)] $a\oplus 0 = a;$
\item[(iv)] $a\oplus 1= 1;$
\item[(v)] $(a^*)^* = a;$
\item[(vi)] $a\oplus a^* =1;$
\item[(vii)] $0^* = 1;$
\item[(viii)] $(a^*\oplus b)^*\oplus b=(a\oplus b^*)^*\oplus a.$
\end{enumerate}

If we define a partial operation $+$ on $M$ in such a way that $a+b$
is defined in $M$ if and only if $a \le b^*$ and  we set
$a+b:=a\oplus b$, then $(M;+,0,1)$ is an effect algebra with RDP which is a distributive lattice.

We recall that a {\it tribe} on $\Omega \ne \emptyset$
is a collection ${\mathcal T}$ of fuzzy sets from $[0,1]^\Omega$ such
that (i) $1 \in {\mathcal T}$, (ii) if $f \in {\mathcal T}$, then $1 - f \in
{\mathcal T},$ and (iii) if $\{f_n\}$ is a sequence from ${\mathcal T}$,
then $\min \{\sum_{n=1}^\infty f_n,1 \}\in {\mathcal T}.$  A tribe is
always a $\sigma$-complete MV-algebra of fuzzy sets where MV-operations are defined by points.

\section{Loomis--Sikorski Theorem}

In this section, we study representations of $\sigma$-complete effect algebra with RDP: a canonical representation and regular ones.
A basic tool of investigation in our study is an application of the Loomis-Sikorski Theorem of monotone $\sigma$-complete effect algebras with RDP proved in  \cite{BCD}:

\begin{theorem}\label{LoSiEA}
Every monotone $\sigma$-complete effect algebra with RDP is a $\sigma$-epi\-morphic image of an effect-tribe with RDP.
\end{theorem}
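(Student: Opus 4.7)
The plan is to combine Ravindran's embedding $M = \Gamma(G,u)$, with $(G,u)$ a unital interpolation group, with a Baire-hierarchy construction on a state space, mimicking Mundici's Loomis--Sikorski argument for $\sigma$-complete MV-algebras and adapting it via RDP.

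First I would let $\Omega$ be the set $\mathcal{S}(M)$ of all states on $M$ (equivalently, normalized positive additive functionals on $(G,u)$), equipped with the weak-$\ast$ topology. Each $a \in M$ induces the evaluation $\hat{a}(s):= s(a)$, and the map $a \mapsto \hat{a}$ is an effect-algebra homomorphism of $M$ into $[0,1]^\Omega$, because whenever $a+b$ is defined in $M$ one has $\widehat{a+b} = \hat{a}+\hat{b}$ pointwise. Under RDP, states separate points of $G$ (this is the Goodearl--Handelman style consequence of Riesz interpolation), so $a\mapsto \hat{a}$ is injective. I would then take $\mathcal{T}$ to be the smallest family of fuzzy sets on $\Omega$ that contains $\{\hat{a}: a\in M\}$, is closed under $f\mapsto 1-f$ and partial addition $(f,g)\mapsto f+g$ when $f+g \le 1$ pointwise, and is closed under pointwise monotone countable suprema. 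By construction $\mathcal{T}$ is an effect-tribe, and RDP holds pointwise in the MV-algebra $[0,1]$; with a parallel induction one shows that the refining elements can themselves be chosen inside $\mathcal{T}$, so $\mathcal{T}$ has RDP.

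The heart of the argument is the construction of a $\sigma$-epimorphism $\pi:\mathcal{T}\to M$ extending $\hat{a}\mapsto a$. I would build $\mathcal{T}$ as a union $\bigcup_{\alpha<\omega_1}\mathcal{T}_\alpha$ of Baire classes, with $\mathcal{T}_0=\{\hat{a}:a\in M\}$ and $\mathcal{T}_{\alpha}$ obtained from $\bigcup_{\beta<\alpha}\mathcal{T}_\beta$ by forming pointwise monotone sequential suprema and their complements. Correspondingly, $\pi$ is defined by transfinite recursion via $\pi(\bigvee_n f_n):=\bigvee_n \pi(f_n)$, using monotone $\sigma$-completeness of $M$. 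The main obstacle, as in the MV-case, is well-definedness: if $\{f_n\}$ and $\{g_n\}$ are two monotone sequences from lower Baire classes with the same pointwise limit, one has to show $\bigvee_n \pi(f_n) = \bigvee_n \pi(g_n)$ in $M$. Here RDP is essential: via Ravindran's embedding, the inequality $f_n \le \bigvee_m g_m$ pointwise on $\Omega$ translates (using state-separation and Riesz interpolation/refinement in $G$) into $\pi(f_n)\le \bigvee_m \pi(g_m)$ in $M$, and the two suprema coincide.

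Once well-definedness is secured, preservation of $0,1$, complements, partial addition, and monotone countable suprema by $\pi$ is routine from the recursive definition; surjectivity is immediate because $\pi(\hat{a})=a$ for every $a\in M$. Thus $\pi:\mathcal{T}\to M$ is the desired $\sigma$-epimorphism from an effect-tribe with RDP onto $M$. I expect the delicate technical step to be the simultaneous induction proving that the Riesz refinements required for RDP inside $\mathcal{T}$ can be chosen in the same Baire class as the given decomposition, which one handles by applying pointwise RDP in $[0,1]$ and tracking the resulting refinements through the monotone-limit construction.
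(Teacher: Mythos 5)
The theorem you are asked to prove is not proved in this paper at all: it is quoted from [BCD], and Section~3 only recalls the construction used there. In that construction one also takes $\Omega=\mathcal S(M)$, but the effect-tribe is \emph{not} generated by closing $\{\hat a: a\in M\}$ under pointwise operations; it is defined outright as the set of all $f\in[0,1]^\Omega$ for which there is $b\in M$ with $N(f-\hat b)$ meager in the extreme boundary $\partial_e\mathcal S(M)$, and $h(f):=b$. Everything rests on two Baire-category ingredients that your proposal never supplies: Choquet's theorem that $\partial_e\mathcal S(M)$ is a Baire space (so that $\hat a\sim\hat b$ forces $a=b$, since for $a\ne b$ the set where $\hat a\ne\hat b$ contains a nonempty relatively open set and hence is non-meager), and the key lemma that if $a_n\nearrow a$ in $M$ then $\sup_n\hat a_n$ differs from $\hat a$ only on a meager subset of $\partial_e\mathcal S(M)$. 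Your plan replaces this by a transfinite recursion $\pi(\sup_n f_n):=\bigvee_n\pi(f_n)$ over pointwise suprema on the \emph{whole} state space, and the well-definedness of that recursion is exactly the point you do not prove.

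The justification you sketch (``$f_n\le\sup_m g_m$ pointwise translates via state separation into $\pi(f_n)\le\bigvee_m\pi(g_m)$'') works only at the bottom of the hierarchy, where every function is literally of the form $\hat a$ and states are order-determining under RDP. Already one step up it fails: for a monotone sequence $\hat a_n$ the pointwise supremum $f=\sup_n\hat a_n$ is in general a strictly smaller function than $\widehat{\bigvee_n a_n}$ on $\mathcal S(M)$ (they agree only at states that are $\sigma$-additive along the sequence), and this discrepancy is controlled (made meager) only on $\partial_e\mathcal S(M)$, not on $\Omega$. Consequently, for $f,g$ in Baire class $\ge 2$ one has $f\ne\widehat{\pi(f)}$ in general, so a pointwise inequality $f\le g$ carries no information about $\pi(f)$ versus $\pi(g)$, and the consistency of your recursion for two different monotone sequences with the same pointwise limit is left completely open --- this is precisely the obstruction the meager-equivalence construction of [BCD] (and of Mundici and Dvure\v{c}enskij in the MV-case) is designed to circumvent. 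A second, smaller gap: your claim that the generated family $\mathcal T$ has RDP ``by a parallel induction'' is unsubstantiated, because the canonical Riesz refinements in $[0,1]$ are built from minima and truncated differences, operations under which an effect-tribe generated as you describe need not be closed; in the canonical representation RDP of $\mathcal T$ is instead derived from RDP of $M$ together with the explicit description of $\mathcal T$ modulo meager sets. To repair your argument you would essentially have to prove the meager-discrepancy lemma on $\partial_e\mathcal S(M)$ and then you are back to the known proof.
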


For $\sigma$-complete MV-algebras, we have a Loomis-Sikorski type representation which was proved independently in  \cite{Mun, Dvu1}:

\begin{theorem}\label{LoSiMV}
Every  $\sigma$-complete MV-algebra is a $\sigma$-epimorphic image of a tribe.
\end{theorem}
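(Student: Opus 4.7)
The plan is to derive Theorem~\ref{LoSiMV} from Theorem~\ref{LoSiEA}. Let $M$ be a $\sigma$-complete MV-algebra. By the passage recorded after axiom~(viii), setting $a+b:=a\oplus b$ whenever $a\le b^{*}$ turns $(M;+,0,1)$ into a monotone $\sigma$-complete effect algebra with RDP. Theorem~\ref{LoSiEA} then provides an effect-tribe $\mathcal{T}$ of fuzzy sets on some $\Omega$ satisfying RDP, together with a $\sigma$-epimorphism of effect algebras $h:\mathcal{T}\to M$.

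The substantive step is to upgrade $\mathcal{T}$ to a tribe and $h$ to a $\sigma$-epimorphism of MV-algebras. I would first show that $\mathcal{T}$ is closed under pointwise minimum and maximum, so that it becomes a lattice-ordered effect algebra with RDP, i.e.\ an MV-algebra, whose lattice operations agree with the pointwise operations on fuzzy sets. Once this is in hand, the identity $f\oplus g=(f\wedge g')+g=\min(f+g,1)$ shows that $\mathcal{T}$ is closed under the MV truncated-sum. For countable truncated sums, the finite partial sums (truncated at $1$) form a pointwise nondecreasing sequence in $\mathcal{T}$ whose pointwise supremum equals $\min(\sum_{n=1}^\infty f_n,1)$; axiom~(iv) of effect-tribes guarantees that this supremum belongs to $\mathcal{T}$. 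Thus $\mathcal{T}$ is a tribe.

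It then remains to check that $h$ preserves $\oplus$. Since $\oplus$ is built from $+$, complementation, and the lattice operations via $a\oplus b=(a\wedge b')+b$, and since $h$ preserves $+$ and complementation by hypothesis, it is enough to verify that $h$ preserves finite meets and joins; this follows because $h$ is a surjective $\sigma$-morphism of effect algebras with RDP between lattice-ordered effect algebras. Consequently $h$ is an MV-algebra $\sigma$-epimorphism from the tribe $\mathcal{T}$ onto $M$, which is exactly the conclusion.

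The main obstacle is the pointwise lattice-closure of $\mathcal{T}$: RDP alone is not enough to force an effect algebra to be a lattice, so one has to use simultaneously the RDP, the monotone $\sigma$-completeness axiom~(iv), and the fact that $\mathcal{T}$ consists of genuine $[0,1]$-valued functions whose order is pointwise. A workable route is to reason pointwise: fix $f,g\in\mathcal{T}$, use RDP to decompose $f$ and $g$ against each fixed $\omega\in\Omega$ into comparable pieces, then assemble a candidate for $f\wedge g$ as a countable combination of elements already in $\mathcal{T}$, whose existence in $\mathcal{T}$ is secured by axiom~(iv). An alternative and possibly cleaner strategy is to revisit the construction of \cite{BCD} directly and observe that when the source effect algebra is itself an MV-algebra, the effect-tribe produced there is automatically closed under pointwise $\min$ and $\max$, hence is a tribe from the outset.
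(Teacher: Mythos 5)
The paper itself offers no proof of Theorem \ref{LoSiMV}: it is imported from \cite{Mun, Dvu1}, so the only question is whether your derivation from Theorem \ref{LoSiEA} stands on its own. It does not, at two places. First, the decisive step --- that the effect-tribe $\mathcal T$ supplied by Theorem \ref{LoSiEA} is closed under pointwise $\min$ and $\max$ --- is precisely where the RDP theory differs from the MV theory, and you offer only a plan (``decompose pointwise by RDP and assemble a candidate by countable combinations''), not an argument. Monotone $\sigma$-completeness together with RDP does not force pointwise lattice closure of a system of fuzzy sets; effect-tribes with RDP can be far from tribes (the paper itself recalls \cite[Ex. 4.3]{Dvu3}, an effect-tribe with RDP containing non-$\mathcal B_0(\mathcal T)$-measurable members). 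What is true is that the \emph{canonical} representation on the state space is pointwise lattice-closed when $M$ is an MV-algebra, because the extremal states of an MV-algebra are MV-homomorphisms into $[0,1]$, so $f\sim a$ and $g\sim b$ imply $\min\{f,g\}\sim a\wedge b$, $\max\{f,g\}\sim a\vee b$ and $\min\{f+g,1\}\sim a\oplus b$; this is exactly the published route of \cite{Dvu1,Mun}, which you relegate to a closing remark without carrying it out.

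Second, even granting lattice closure of $\mathcal T$, your claim that $h$ preserves finite meets and joins ``because it is a surjective $\sigma$-morphism of effect algebras with RDP between lattice-ordered effect algebras'' is a false principle. Take the tribe $\mathcal T=[0,1]^{\{1,2\}}$ and $h(a,b)=(a+b)/2$: this is a surjective $\sigma$-homomorphism of effect algebras onto the $\sigma$-complete MV-algebra $[0,1]$, both sides are lattice-ordered with RDP, yet $h\bigl((1,0)\vee(0,1)\bigr)=1\neq \tfrac12=h(1,0)\vee h(0,1)$, and likewise $(1,0)\oplus(1,0)=(1,0)$ is sent to $\tfrac12$ while $h(1,0)\oplus h(1,0)=1$, so neither the lattice operations nor $\oplus$ are preserved. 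Hence preservation of the MV-operations cannot be deduced from abstract properties of $h$; it must be read off the concrete construction (again via extremal states being MV-homomorphisms, which gives $h(\min\{f+g,1\})=h(f)\oplus h(g)$ directly). As written, your argument yields only that $M$ is a $\sigma$-epimorphic image of an effect-tribe with RDP, i.e.\ Theorem \ref{LoSiEA}, not Theorem \ref{LoSiMV}.
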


The proofs of these results  proved in \cite{BCD,Dvu1} used the notion of states, analogues  of probability measures.

We recall that a {\it state} on an effect algebra $M$ is any mapping $s: M \to [0,1]$ such that (i) $s(1) =1$ and (ii) $s(a+b) = s(a) + s(b)$ whenever $a+b$ is defined in $M.$  We denote by ${\mathcal S}(M)$ the
set of all states on $M$. It can happen that ${\mathcal S}(M)$ is empty,
see e.g. \cite[Ex 4.2.4]{DvPu}. But if
$M$ satisfies RDP, ${\mathcal S}(M)$ is nonempty, see \cite{Rav} and \cite[Cor. 4.4]{Goo}.  In particular, every MV-algebra admits a state. We recall that  ${\mathcal S}(M)$ is always a convex set. A state $s$ is said to be {\it extremal} if $s = \lambda s_1 +
(1-\lambda)s_2$ for $\lambda \in (0,1)$ implies $s = s_1 = s_2.$
We denote by $\partial_e \mathcal S(M)$  the set of all extremal states of
${\mathcal S}(M)$. We say that a net of states, $\{s_\alpha\}$, on $M$
{\it weakly converges} to a state $s$ on $M$ if $s_\alpha(a) \to
s(a)$ for any $a \in M$. In this topology,  ${\mathcal S}(M)$ is a
compact Hausdorff topological space and every state on $M$ lies in
the weak closure of the convex hull of the extremal states as it
follows from the Krein-Mil'man theorem. Hence, $\mathcal S(M)$ is empty iff so is $\partial_e \mathcal S(M).$

If $\mathcal S(M)$ is non-void, given an element $a \in M,$ we define a function $\hat a: \mathcal S(M) \to [0,1]$ by
$$
\hat a(s):= s(a), \ s \in \mathcal S(M).
$$
Then $\hat a$ is a continuous affine function on $\mathcal S(M).$

It is important to note that if $M$ is an MV-algebra, $\partial_e \mathcal S(M)$ is always a compact set. In general, this is not true for every  effect algebra. However,  a delicate  result of Choquet \cite[page 49]{Alf}  says that  the set of extremal states  is
always a Baire space, i.e. the Baire Category Theorem holds for $\partial_e {\mathcal
S}(M).$

Let $f$ be a real-valued function on ${\mathcal S}(M).$  We define
$$
N(f) :=\{s \in \partial_e{\mathcal S}(M) :\, f(s) \ne 0\}.\eqno(3.1)
$$

The proof of the Loomis-Sikorski Theorem from \cite[Thm 4.1]{BCD} used an effect-tribe $\mathcal T$ of fuzzy sets defined on $\Omega:= \mathcal S(M),$ and the effect-tribe $\mathcal T$ is the class of all fuzzy sets $f$ on
${\mathcal S}(M)$  with the property that there exists $b \in M$ such that
$N(f-\hat b)$ is a meager subset of $\partial_e {\mathcal S}(M)$ (in the
relative topology), then we write $f \sim b$. The $\sigma$-homomorphism $h$ was then defined by $h(f):=b$ if $f \sim b.$ We call this triple $(\Omega,\mathcal T,h)$ the {\it canonical representation} of $M.$ Every triple $(\Omega,\mathcal T,h)$ such that $\mathcal T$ is an effect-tribe of fuzzy sets on $\Omega$ and $h$ maps $\sigma$-homomorphically  $\mathcal T$ onto $M$ is said to be a {\it representation} of $M.$

\begin{proposition}\label{pr:3.3}
Let $(\Omega,\mathcal T, h)$ be the canonical representation of a monotone $\sigma$-complete effect algebra $M$ with RDP.
\begin{enumerate}
\item[{\rm (ii)}]
If $a\le b,$ $a,b \in M$, there are $f,g \in \mathcal T$ such that $f\le g$ and $h(f)= a,$ $h(g)=b.$

\item[{\rm (ii)}]
If $f,g \in \mathcal T,$ $f\le g,$ and let $c$ be an element of $M$ such that $h(f)\le c\le h(g).$  Then there exists a function $s \in \mathcal T$ such that $f\le s\le g$ and $h(s)=c.$

\end{enumerate}
\end{proposition}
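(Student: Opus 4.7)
The plan is to exploit the canonical representative $\hat a\colon \mathcal S(M)\to[0,1]$ attached to each $a\in M$. Since $N(\hat a - \hat a)=\emptyset$ is vacuously meager, $\hat a$ lies in $\mathcal T$ and $h(\hat a)=a$. Moreover, evaluation of continuous affine functions respects the order on all of $\mathcal S(M)$ (not just on $\partial_e\mathcal S(M)$): $a\le b$ implies $\hat a(s)=s(a)\le s(b)=\hat b(s)$ for every $s\in\mathcal S(M)$. These two facts are the whole engine.

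For part (i) I would simply take $f:=\hat a$ and $g:=\hat b$. Then $f,g\in\mathcal T$, $h(f)=a$, $h(g)=b$, and $f\le g$ pointwise on $\Omega=\mathcal S(M)$.

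For part (ii), write $a:=h(f)$ and $b:=h(g)$, so $a\le c\le b$. Define
$$
s(\omega):=f(\omega)\vee\bigl(g(\omega)\wedge\hat c(\omega)\bigr),\quad \omega\in\Omega,
$$
using pointwise lattice operations in $[0,1]$. By construction $s$ is a fuzzy set on $\Omega$ and $f\le s\le g$ on all of $\Omega$. To identify $h(s)$, put $N_a:=\{\omega\in\partial_e\mathcal S(M):f(\omega)\ne\hat a(\omega)\}$ and $N_b:=\{\omega\in\partial_e\mathcal S(M):g(\omega)\ne\hat b(\omega)\}$; both are meager because $f\sim a$ and $g\sim b$. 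Off the meager set $N_a\cup N_b$ we have $f=\hat a$ and $g=\hat b$, and since $a\le c\le b$ forces $\hat a\le\hat c\le\hat b$ pointwise, the defining formula collapses to $s=\hat c$ there. Hence $N(s-\hat c)\subseteq N_a\cup N_b$ is meager, so $s\sim c$ and therefore $s\in\mathcal T$ with $h(s)=c$.

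The only subtle point, and really the main thing that needs to be handled with care, is that the bounds $f\le s\le g$ must hold pointwise on the \emph{entire} $\Omega=\mathcal S(M)$, not merely on the comeager subset of $\partial_e\mathcal S(M)$ where $f,g$ agree with $\hat a,\hat b$. This is precisely why I clip $\hat c$ from below by $f$ and from above by $g$: outside $\partial_e\mathcal S(M)$, or on the meager "defect" sets $N_a,N_b$, the clipping may alter the value but this alteration is invisible to the equivalence $\sim$, so nothing is lost when showing $s\sim c$.
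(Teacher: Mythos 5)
Your proof is correct and follows essentially the same route as the paper: the key step for (ii) is the same clipping construction $s=\max\{f,\min\{g,\cdot\}\}$ applied to a preimage of $c$, with meagerness of the defect sets doing the work. The only difference is that you use the canonical affine representatives $\hat a,\hat b,\hat c$ (noting $N(\hat a-\hat a)=\emptyset$, so $\hat a\in\mathcal T$ and $h(\hat a)=a$) in place of the paper's arbitrary preimages adjusted by pointwise $\min/\max$, which slightly simplifies (i) and is fully adequate for the statement as it only asks for existence.
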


\begin{proof}

(i) Let $f \sim a$ and $g\sim b$ for some $f,g \in \mathcal T.$ We have  $\max\{f,g\} \sim a\vee b =b,$ which entails $\max\{f,g\} \in \mathcal T$ and $\max\{f,g\} \sim b.$ In a similar way, $\min\{f,g\}\in \mathcal T$ and $\min\{f,g\} \sim a.$

(ii) Since $h$ is surjective, there is a function  $s_1\in \mathcal T$ such that $h(s_1)=c.$ If we set $s = \max\{f, \min\{g,s_1\}\},$ by (1), $s \in \mathcal T $ and    $s$ is the function in question.
\end{proof}

Let $\mathcal T$ be an effect-tribe on $\Omega$ and let $\mathcal B_0(M):=\{A\subseteq \Omega: \chi_A \in \Sh(\mathcal T)\}.$ By \cite[Prop 4.2]{Dvu3}, $\mathcal B_0(\mathcal T)$ is a $\sigma$-algebra of subsets of $\Omega.$ Let $\mathcal S_0(\mathcal T):=\{A \subseteq \Omega: \chi_A \in \mathcal T\}.$ If $\mathcal T$ satisfies RDP, then $\mathcal B_0(\mathcal T)
= \mathcal S_0(\mathcal T),$ \cite[p. 72]{Dvu3}. By \cite[Ex. 4.3]{Dvu3}, there is an effect-tribe $\mathcal T$ with RDP such that not every $f \in \mathcal T$ is $\mathcal B_0(\mathcal T)$-measurable. However, if $\mathcal T$ is a tribe, then every $f \in \mathcal T$ is $\mathcal B_0(\mathcal T)$-measurable, \cite{BuKl}. On the other hand, by \cite[Prop 4.7]{Dvu3}  if an effect-tribe $\mathcal T$ satisfies RDP, then $\mathcal T'$, the system of all functions $f \in \mathcal T$ such that $f$ is $\mathcal B_0(\mathcal T)$-measurable, is an effect-tribe and $\mathcal B_0(\mathcal T) = \mathcal B_0(\mathcal T').$

\begin{theorem}\label{th:5.1}
The canonical representation of a monotone $\sigma$-complete effect algebra $M$ with RDP has the property $h(f)=0$ if and only if $\chi_{N(f)} \in \mathcal T$ and $h(\chi_{N(f)})=0.$ In addition, $h$ maps $\mathcal B_0(\mathcal T)$ onto $\Sh(M).$
\end{theorem}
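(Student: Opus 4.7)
The plan for the first biconditional is to reduce it to a tautology via the identity
\[
N(\chi_{N(f)}) = \{s \in \partial_e\mathcal{S}(M) : \chi_{N(f)}(s) \ne 0\} = N(f),
\]
which holds because $N(f)$ is by definition already a subset of $\partial_e\mathcal{S}(M)$. The condition $h(f)=0$ unfolds to $f\sim 0$, i.e., $N(f-\hat 0)=N(f)$ is meager in $\partial_e\mathcal{S}(M)$, while the conjunction ``$\chi_{N(f)}\in\mathcal{T}$ with $h(\chi_{N(f)})=0$'' unfolds to $\chi_{N(f)}\sim 0$, i.e., $N(\chi_{N(f)})=N(f)$ is meager. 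Both conditions thus collapse to the meagerness of $N(f)$, giving the equivalence.

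For the surjectivity half of the second claim, I would start from $a\in\Sh(M)$ and first prove that $\hat a(s)=s(a)\in\{0,1\}$ for every $s\in\partial_e\mathcal{S}(M)$. This is a standard consequence of RDP: uniqueness of the decomposition of any $b\in M$ through the sharp $a$ gives a direct product $M\cong[0,a]\times[0,a']$ of effect algebras, so any extremal state of $M$ must concentrate on one of the two factors, forcing $s(a)\in\{0,1\}$. With this at hand, set $A:=\{s\in\Omega:\hat a(s)\ne 0\}$; then $\chi_A$ and $\hat a$ coincide on all of $\partial_e\mathcal{S}(M)$, so $N(\chi_A-\hat a)=\emptyset$ is meager and $\chi_A\sim a$, yielding $\chi_A\in\mathcal{T}$ and $h(\chi_A)=a$. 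To see $A\in\mathcal{B}_0(\mathcal{T})$, observe that any $g\in\mathcal{T}$ with $g\le\chi_A$ and $g\le 1-\chi_A$ in the effect-algebra order of $\mathcal{T}$ satisfies the same inequalities pointwise, hence $g\equiv 0$; thus $\chi_A\wedge(1-\chi_A)=0$ in $\mathcal{T}$, i.e., $\chi_A\in\Sh(\mathcal{T})$.

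For the other inclusion, $h(\mathcal{B}_0(\mathcal{T}))\subseteq\Sh(M)$, I would take $A\in\mathcal{B}_0(\mathcal{T})$ and set $a:=h(\chi_A)$. From $\chi_A\sim a$ one obtains $\hat a(s)=\chi_A(s)\in\{0,1\}$ on a comeager subset of $\partial_e\mathcal{S}(M)$; continuity of $\hat a$ together with the Baire property of $\partial_e\mathcal{S}(M)$ promotes this to $\hat a(\partial_e\mathcal{S}(M))\subseteq\{0,1\}$. If $c\in M$ satisfies $c\le a$ and $c\le a'$, then for every $s\in\partial_e\mathcal{S}(M)$ we have $\hat c(s)\le\min(\hat a(s),1-\hat a(s))=0$, and by the Krein-Mil'man theorem $\hat c\equiv 0$ on $\mathcal{S}(M)$. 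The same separation-of-states principle that underlies the well-definedness of the canonical representation now forces $c=0$, giving $a\wedge a'=0$ and $a\in\Sh(M)$.

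The main obstacle is the last step: passing from $\hat c\equiv 0$ back to $c=0$. This relies on the fact, implicit in the construction of the canonical representation of~\cite{BCD}, that states separate $M$ whenever $M$ is a monotone $\sigma$-complete effect algebra with RDP, via its embedding as the unit interval of an interpolation group (cf.~\cite{Rav}).
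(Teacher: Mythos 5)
Your proof is correct, and for the first equivalence it is essentially the paper's argument (both reduce to the observation that $N(\chi_{N(f)})=N(f)$, so everything collapses to meagerness of $N(f)$). Where you genuinely diverge is the inclusion $h(\mathcal B_0(\mathcal T))\subseteq \Sh(M)$: the paper stays inside $\mathcal T$, using Proposition \ref{pr:3.3} to pull a common lower bound $g$ of $h(\chi_A)$ and $h(\chi_A)'$ back to a function $g_4\in\mathcal T$ with $g_4\le \chi_A,\,1-\chi_A$, which forces $g_4=0$ pointwise and hence $g=0$; you instead argue on the state space, upgrading ``$\hat a=\chi_A$ off a meager set'' to ``$\hat a$ is $\{0,1\}$-valued on all of $\partial_e\mathcal S(M)$'' via continuity of $\hat a$ and the Choquet--Baire property of $\partial_e\mathcal S(M)$, and then killing any common lower bound $c$ through its vanishing on extremal states. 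Both work; the paper's route uses only machinery already established and no topology, while yours is more conceptual (sharp images correspond to essentially $\{0,1\}$-valued affine functions) and, as a bonus, supplies a proof of the fact the paper merely asserts in the surjectivity half, namely $s(a)\in\{0,1\}$ for $a\in\Sh(M)$ and $s$ extremal, via centrality of sharp elements under RDP and the induced decomposition $M\cong[0,a]\times[0,a']$. Two small streamlinings of your last step: the appeal to Krein--Mil'man and to a separation-of-states principle is not needed, because once $\hat c$ vanishes on $\partial_e\mathcal S(M)$ you have $N(\hat c-\hat 0)=\emptyset$ and $N(\hat c-\hat c)=\emptyset$, i.e.\ $\hat c\sim 0$ and $\hat c\sim c$, so $c=0$ follows directly from the well-definedness of $h$ in the canonical representation --- exactly the fact you flag as the ``main obstacle,'' which is thus already built into the data you are given rather than an extra hypothesis to be imported from \cite{Rav}.
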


\begin{proof}
Let $(\Omega,\mathcal T,h)$ be the canonical representation of $M$ used from \cite[Thm 4.1]{BCD}. We have that $\mathcal T$ consists of all functions $f \in [0,1]^\Omega$ such that $f \sim b$ for some $b \in M;$ where we have $\Omega =\mathcal S(M)$ and $\Omega_0 =\partial_e \mathcal S(E).$
Assume that for $f \in \mathcal T,$ we have $h(f)=0.$ This means that $f \sim 0$, that is, $N(f)$ is a meager set. Hence, $\{\omega\in \Omega_0: \chi_{N(f)}(\omega) \ne 0\}= \{\omega\in \Omega_0: f(\omega)\ne 0\}$ is meager. Whence, $\chi_{N(f)} \in \mathcal T$ and $\chi_{N(f)} \sim 0$ and $h(\chi_{N(f)})=0.$

Conversely, let for some $f \in \mathcal T,$ $\chi_{N(f)} \in \mathcal T,$ and $h(\chi_{N(f)})=0.$ That is, $\chi_{N(f)} \sim 0,$ and $N(f)$ is meager which entails $f \sim 0.$

Let $f=\chi_A \in \mathcal B_0(\mathcal T)$  and let $g \in M$ be such that $g \le h(f)$ and $g\le h(f').$ Assume that $g_1 \in \mathcal T$ be such that $h(g_1)=g.$ By (2) of Proposition  \ref{pr:3.3}, the functions $g_2:= \min\{f,g_1\}$ and $g_3:= \min\{f',g_1\}$ belong to $\mathcal T,$ and $h(g_2)=g=h(g_3).$  Again by (2) of Proposition  \ref{pr:3.3}, the function $g_4=\min\{g_2,g_3\}\in \mathcal T$ and $h(g_4)=g.$ But $g_4 \le f$ and $g_4\le 1-f$ so that $g_4=0$ and $h(g_4)=g=0,$ and $h(f) \in \Sh(M).$

Now let $b \in \Sh(M).$ Then $s(b)\in \{0,1\}$ for any extremal state $s$ on $M.$ Define a function $f_b$ on $\Omega$ by $f_b(s)=s(b)$ if $s$ is an extremal state, otherwise, $f_b(s):=0.$  Then $f_b \sim b,$  $f_b \in \mathcal B_0(\mathcal T),$ and $h(f_b)=0.$
\end{proof}

The last result can be generalized as follows.

Let $\Omega_0$ be a subset of a set $\Omega \ne \emptyset.$ We recall that a $\sigma$-{\it ideal} on $\Omega_0$ is a non-empty system $\mathcal I$  of subsets of $\Omega_0$ such that (i) if $A \subseteq B \in \mathcal I,$ then $A \in \mathcal I,$ and (ii) if $A_n \in \mathcal I,$ $n \ge 1,$ then $\bigcup_n A_n \in \mathcal I.$ For example, let $M$ be an effect algebra  and let $\Omega= \mathcal S(M)$ and $\Omega_0=\partial_e \mathcal S(M).$  The set of all meagre subsets of $\Omega_0$ is a $\sigma$-ideal.

Let $f$ be a real-valued function on $\Omega.$ We define
$$
N_{\Omega_0}(f):= \{\omega \in \Omega_0: f(\omega)\ne 0\}.
$$

Let $(\Omega,\mathcal T,h)$ be a representation  of a monotone $\sigma$-complete effect algebra $M.$ We say that $(\Omega,\mathcal T,h)$ is {\it regular} if $h(f)=0$ iff $\chi_{N_{\Omega_0}(f)} \in \mathcal T$ and $h(\chi_{N_{\Omega_0}(f)})=0.$

\begin{theorem}\label{th:5.2}
Let $(\Omega,\mathcal T, h)$ be a representation of a monotone $\sigma$-complete effect algebra $M$ with RDP and let $\mathcal T$ have RDP. Let $\mathcal I_{\Omega_0}$ be an ideal of subsets of a fixed subset $\Omega_0$ of $\Omega$ such that $f \in [0,1]^\Omega$ belongs to $\mathcal T$ if and only if  there exists a function $g \in \mathcal T$ such that $N_{\Omega_0}(f-g) \in \mathcal I_{\Omega_0}.$ Then $(\Omega, \mathcal T, h)$ is regular and $h(f)=h(g)$ if and only if $N_{\Omega}(f-g)\in \mathcal I_{\Omega}.$

In addition, {\rm (1)}  suppose every $f\in \mathcal T$ is $\mathcal B_0(\mathcal T)$-measurable, and if $h(f)\le h(g),$ then $h(\chi_A)=0,$ where
$$
A :=\{\omega \in \Omega: f(\omega) > g(\omega)\}.
$$
Then $h(\mathcal B_0(\mathcal T))\subseteq \Sh(M).$

{\rm(2)} If $f\wedge (1-f) \in \mathcal T$ for every $f \in \mathcal T,$ then $h(\mathcal B_0(\mathcal T))= \Sh(M).$
\end{theorem}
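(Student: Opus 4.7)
The plan is to follow the blueprint of Theorem~\ref{th:5.1}, substituting the abstract ideal $\mathcal I_{\Omega_0}$ for the meager subsets of $\partial_e\mathcal S(M)$. Three items must be proved: the biconditional $h(f)=h(g)\iff N_{\Omega_0}(f-g)\in \mathcal I_{\Omega_0}$ (of which regularity is the specialisation $g=\chi_{N_{\Omega_0}(f)}$), the inclusion in part~(1), and the equality in part~(2).

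The core reduction, on which both directions of the biconditional depend, is: if $\phi\in \mathcal T$ with $N_{\Omega_0}(\phi)\in \mathcal I_{\Omega_0}$, then $h(\phi)=0$. To obtain it I would invoke the characterising hypothesis of $\mathcal T$ (with witness $g=0$) to place $\chi_{N_{\Omega_0}(\phi)}$ into $\mathcal T$, and then apply the analogue of Proposition~\ref{pr:3.3}(ii)---whose proof uses only RDP of $\mathcal T$ and surjectivity of $h$---to interpolate between $\phi$ and this sharp characteristic function, forcing the $h$-image of both to $0$ by monotonicity and a symmetric comparison. Given the reduction, the biconditional follows from the splitting $f=(f\wedge g)+\phi$ and $g=(f\wedge g)+\psi$ via RDP of $\mathcal T$, with $\phi,\psi$ having $\Omega_0$-supports inside $N_{\Omega_0}(f-g)\in \mathcal I_{\Omega_0}$, to which the reduction is applied; the converse direction comes by using the characterising hypothesis to pull $f$ and $g$ back to a common representative of the shared $h$-value.

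For part~(1), fix $A\in \mathcal B_0(\mathcal T)$ and $c\in M$ with $c\le h(\chi_A)$ and $c\le 1-h(\chi_A)$. Proposition~\ref{pr:3.3}(ii) supplies $g_1,g_2\in \mathcal T$ with $g_1\le \chi_A$, $g_2\le 1-\chi_A$, and $h(g_1)=h(g_2)=c$. Disjointness of supports yields $\{g_1>g_2\}=N_\Omega(g_1)$, so the hypothesis of~(1) applied to $h(g_2)\le h(g_1)$ gives $h(\chi_{N_\Omega(g_1)})=0$; since $g_1$ is $\mathcal B_0(\mathcal T)$-measurable, $g_1\le \chi_{N_\Omega(g_1)}$ pointwise in $\mathcal T$, so monotonicity of $h$ forces $c=h(g_1)=0$ and hence $h(\chi_A)\wedge (1-h(\chi_A))=0$, i.e.\ $h(\chi_A)\in \Sh(M)$.

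For part~(2), the substantial inclusion is $\Sh(M)\subseteq h(\mathcal B_0(\mathcal T))$. Given $b\in \Sh(M)$, pick $f\in \mathcal T$ with $h(f)=b$; since $f\wedge (1-f)\in \mathcal T$ and $h(f\wedge (1-f))\le b\wedge b'=0$, regularity supplies $\chi_U\in \mathcal T$ with $h(\chi_U)=0$ and $U=\{\omega\in \Omega_0:\,0<f(\omega)<1\}$. Setting $A_1=\{\omega\in \Omega_0:f(\omega)=1\}$, the function $\chi_{A_1}$ agrees with $f$ on $\Omega_0\setminus U$, so the characterising hypothesis places $\chi_{A_1}\in \mathcal T$, the biconditional yields $h(\chi_{A_1})=b$, and sharpness of $\chi_{A_1}$ is immediate from $f\wedge (1-f)\in \mathcal T$. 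The opposite inclusion is obtained by verifying that the hypothesis of~(1) holds in the present setting: $\mathcal B_0(\mathcal T)$-measurability of every $f\in \mathcal T$ follows from the closure of $\mathcal T$ under pointwise minima with complements, and the implication $h(f)\le h(g)\Rightarrow h(\chi_{\{f>g\}})=0$ is deduced by applying regularity to the positive difference $f\ominus (f\wedge g)\in \mathcal T$, whose $h$-image vanishes. The principal obstacle throughout is the reduction in the second paragraph, the only step that genuinely demands the interplay between RDP of $\mathcal T$, the characterising closure, and the abstract ideal $\mathcal I_{\Omega_0}$.
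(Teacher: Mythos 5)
Your plan stands or falls on the ``core reduction'': that $\phi\in\mathcal T$ together with $N_{\Omega_0}(\phi)\in\mathcal I_{\Omega_0}$ forces $h(\phi)=0$. The justification you offer --- put $\chi_{N_{\Omega_0}(\phi)}$ into $\mathcal T$ by the characterising hypothesis with witness $g=0$, then ``interpolate'' via an analogue of Proposition \ref{pr:3.3}(ii) and conclude by ``monotonicity and a symmetric comparison'' --- does not close this step. The membership condition only tells you that $\chi_{N_{\Omega_0}(\phi)}$ \emph{belongs} to $\mathcal T$; it carries no information about its $h$-value, and no interpolation or monotonicity argument can manufacture $h(\chi_{N_{\Omega_0}(\phi)})=0$ from membership alone. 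In fact this reduction \emph{is} the nontrivial half of regularity, so the argument is circular. A second problem with the same step: Proposition \ref{pr:3.3} is proved only for the canonical representation, where $h$ is defined through the ideal of meager sets; for an arbitrary representation its analogue is exactly what the paper re-proves inside this theorem as Claims 1 and 2, and only under the additional hypotheses of part (1) (every $f\in\mathcal T$ is $\mathcal B_0(\mathcal T)$-measurable and $h(f)\le h(g)$ implies $h(\chi_{\{f>g\}})=0$), so it is not available in the first, unconditional part of the theorem. For comparison, the paper's argument for the first part does not interpolate at all: it works with $\max\{f,g\}$, which differs from $f$ and from $g$ only on subsets of $N_{\Omega_0}(f-g)\in\mathcal I_{\Omega_0}$, and it tacitly reads the hypothesis as also tying the kernel of $h$ to $\mathcal I_{\Omega_0}$ (its opening line infers $N_{\Omega_0}(f)\in\mathcal I_{\Omega_0}$ from $h(f)=0$). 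That compatibility between $h$ and the ideal is an input --- it is what holds in the canonical representation --- and neither your proposal nor your ``pull back to a common representative'' for the converse direction supplies it from the stated membership condition.

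On the additional parts: your part (1) is essentially workable once the interpolation Claims are available (your use of the hypothesis with $h(g_1)\le h(g_2)$ to get $h(\chi_{N_\Omega(g_1)})=0$ is fine), though passing from the pointwise inequality $g_1\le\chi_{N_\Omega(g_1)}$ to $h(g_1)\le h(\chi_{N_\Omega(g_1)})$ needs the difference to lie in $\mathcal T$ (e.g.\ $(1-g_1)\chi_{N_\Omega(g_1)}\in\mathcal T$ by the lemma of [Dvu3] the paper quotes), a point you gloss. The genuine error is in part (2): you claim to recover the hypotheses of (1) from the closure $f\wedge(1-f)\in\mathcal T$, in particular deducing $h(\chi_{\{f>g\}})=0$ from $h(f)\le h(g)$ by ``applying regularity to $f\ominus(f\wedge g)$, whose $h$-image vanishes.'' But the hypothesis of (2) gives only minima of $f$ with $1-f$, not binary minima, so $(f-g)^{+}$ need not even be in $\mathcal T$; and even when the pointwise decomposition lies in $\mathcal T$, $h(f)\le h(g)$ yields only $h((f-g)^{+})\le h((g-f)^{+})$, not $h((f-g)^{+})=0$. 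Likewise $\mathcal B_0(\mathcal T)$-measurability of every $f\in\mathcal T$ does not follow from that closure. In the paper, (2) is read on top of the standing assumptions of (1), and its proof adds only the ``onto $\Sh(M)$'' half via $G=\{g=1\}$ and $N_{\Omega_0}(g-\chi_G)=N_{\Omega_0}(\min\{g,1-g\})\in\mathcal I_{\Omega_0}$ --- which is essentially your $A_1$ argument, the sound portion of your part (2).
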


\begin{proof}
Let $h(f)=0.$  Since $h(0)=0$ and $0 \in \mathcal T,$ we have $N_{\Omega_0}(f)= N_{\Omega_0}(f-0) \in \mathcal I_{\Omega_0}.$ Therefore, $N_{\Omega_0}(\chi_{N_{\Omega_0}(f)}-0)= N_{\Omega_0}(\chi_{N_{\Omega_0}(f)}) = N_{\Omega_0}(f) \in \mathcal I_{\Omega_0},$ which entails $\chi_{N_{\Omega_0}(f)} \in \mathcal T$ and $h(\chi_{N_{\Omega_0}(f)})=0.$

Conversely, let for  $f\in [0,1]^\Omega$ we have $\chi_{N_{\Omega_0}(f)} \in \mathcal T$ and $h(\chi_{N_{\Omega_0}(f)})=0.$ Then $N_{\Omega_0}(f) = N_{\Omega_0}(\chi_{N_{\Omega_0}(f)})   \in \mathcal I_{\Omega_0}$ which implies $f \in \mathcal T.$  Set $f_0 = \max\{f, \chi_{N_{\Omega_0}(f)}\}$. Then $f_0 = \chi_{N_{\Omega_0}(f)}$ and $N_{\Omega_0}(f_0-f)  \subseteq N_{\Omega_0}(f)$ which yields $N_{\Omega_0}(f_0-f) \in \mathcal I_{\Omega_0}.$ Hence, $f_0 - f \in \mathcal T$ and $0=h(f_0-f)=h(f_0)-h(f)=0-h(f)$ from which we get $h(f)=0,$ and $(\Omega,\mathcal T,h)$ is a regular representation.

Now let $f,g \in \mathcal T$ are such that $N_{\Omega_0}(f-g)\in \mathcal I_{\Omega_0}.$ We assert $h(f)=h(g).$ Indeed, define $g_0 =\max\{f,g\}.$  Then $N_{\Omega_0}(g_0-f) \subseteq N_{\Omega_0}(f-g)$ and whence, $N_{\Omega_0}(g_0-f)\in \mathcal I_{\Omega_0},$ which yields $g_0\in \mathcal T,$ $h(g_0-f)=0$ and $h(g_0)=h(f).$ Similarly $N_{\Omega_0}(g_0-g) \in I_{\Omega_0}$ and $h(g_0)=h(g).$ Therefore, $h(f)=h(g_0)=h(g).$

For the rest of the proof assume that every $f\in \mathcal T$ is $\mathcal B_0(\mathcal T)$-measurable and  condition (1) of our hypotheses holds.

\vspace{2mm} {\it Claim 1.} {\it If $a\le b,$ $a,b \in M$, there are $f,g \in \mathcal T$ such that $f\le g$ and $h(f)= a,$ $h(g)=b.$}
\vspace{2mm}

Let $h(f)=a,$ $h(g)=b.$  If $A=\{\omega \in \Omega: f(\omega) > g(\omega)\},$ by the assumption, $A \in \mathcal B_0(\mathcal T)$ and $h(\chi_A)=0.$ By \cite[Lem 4.1]{Dvu3}, for any $A \in \mathcal B_0(\mathcal T)$ and any $f\in \mathcal T,$  $f\chi_A = \min\{f, \chi_A\} \in \mathcal T,$ where $f\chi_A$ means the product of two functions.

Define $f_0 = \max\{f,g\}.$ If $\omega \in A,$ $f_0(\omega)= f(\omega)$ and if $\omega\in A^c,$ then $f_0(\omega)=g(\omega).$ Therefore, $N_{\Omega_0}(f_0\chi_A -f\chi_A)= \{\omega\in A\cap \Omega_0: f(\omega)>f(\omega)\} = \emptyset \in \mathcal I_{\Omega}$ and $N_{\Omega_0}(f_0\chi_{A^c} -g\chi_{A^c})= \{\omega\in A\cap \Omega_0: g(\omega)>g(\omega)\} =\emptyset \in \mathcal I_{\Omega_0}.$
Hence, $f\chi_A, f\chi_{A^c} \in \mathcal T,$  $h(f_0\chi_A) = h(f\chi_A),$ $h(f_0\chi_{A^c}) = h(g\chi_{A^c})$ and consequently, $f_0 = f_0\chi_A + f_0\chi_{A^c} \in \mathcal T,$ and $f_0 = f\chi_A + g\chi_{A^c}.$

Calculate: $h(f_0)=h(f\chi_A)+h(g\chi_{A^c}).$ But $h(f\chi_A)=h(f\wedge \chi_A) \le h(\chi_A)=0,$ and $h(g\chi_{A^c})=h(g)-h(g\chi_{A})=h(g)$ which gets $h(f_0)=h(g).$

In the same way we can show that $g_0=\min\{f,g\} \in \mathcal T,$  $h(g_0\chi_A)=h(g\chi_A),$  $h(g_0\chi_{A^c})= h(f\chi_{A^c}),$ and $h(g_0)=h(f).$

\vspace{2mm} {\it Claim 2.} {\it If $f,g \in \mathcal T,$ $f\le g,$ and let $c$ be an element of $M$ such that $h(f)\le c\le h(g).$  Then there exists a function $s \in \mathcal T$ such that $f\le s\le g$ and $h(s)=c.$}

\vspace{2mm}
Since $h$ is onto, there is a function  $s_1\in \mathcal T$ such that $h(s_1)=c.$ If we set $s = \max\{f, \min\{g,s_1\}\},$ by Claim 1, $s \in \mathcal T $ and    $s$ is the function in question.

\vspace{2mm}

Now we assume $f\in \mathcal B_0(\mathcal T),$ and let $b \in M$ be such that $b \le h(f),1-h(f).$

Choose a function $g_1 \in \mathcal T$  such that $h(g_1)=b.$ By Claim 2, the functions $g_2:= \min\{f,g_1\} \in \mathcal T$ and $g_3:= \min\{f',g_1\} \in \mathcal T,$ and $h(g_2)=g=h(g_3).$  Again applying Claim 2, the function $g_4=\min\{g_2,g_3\}\in \mathcal T$ and $h(g_4)=g.$ But $g_4 \le f$ and $g_4\le 1-f$ so that $g_4=0$ and $h(g_4)=g=0,$ and $h(f) \in \Sh(M).$

\vspace{2mm}

Finally assume (2), and let $b \in \Sh(M)$ and choose $g \in \mathcal T$  such that $h(g)=b.$ Let $f \in \mathcal T$ be any function  $f \le g, 1-g$, then $h(f)\le b, b'$ giving $h(f)=0.$ Since $h$ is regular, $\chi_{N_{\Omega_0}(f)}\in \mathcal B_0(\mathcal T)$  and $h(\chi_{N_{\Omega_0}(f)})=0.$ Since $g_0=\min\{g,1-g\}\in \mathcal T$ and $g_0\le g, 1-g,$ we have $h(\chi_{N_{\Omega_0}(g_0)})=0,$ and $N_{\Omega_0}(g_0) \in \mathcal I_{\Omega_0}.$ Set $G = \{\omega \in \Omega: g(\omega)=1\}.$ Then $N_{\Omega_0}(g-\chi_G) =\{\omega \in \Omega_0: g(\omega)\ne \chi_G(\omega)\} = \{\omega \in \Omega_0: g\ne 0\}\cap \{\omega \in \Omega_0: g(\omega)\ne 1\} = N_{\Omega_0(g_0)} \in \mathcal I_{\Omega_0}.$ This proves that $h$ maps $\mathcal B_0(\mathcal T)$ onto $\Sh(M).$
\end{proof}

We recall that in the latter theorem, the conditions are satisfied e.g. if $M$ is a $\sigma$-complete MV-algebra and $\mathcal T$ is a tribe.

\section{Smearing of Observables and Spectral Measures}

This section is the main body of the paper. It presents results concerning smearing of observables by a sharp observable and a spectral measure of a given element.

The notion of an observable can be literally extended to any $\sigma$-homomorphism of effect algebras $\xi: \mathcal S \to M,$ where $\mathcal S$ is a $\sigma$-algebra of subsets of a set $\Omega.$ An observable $\xi$ is {\it sharp} if $\xi(\mathcal S)\subseteq \Sh(M).$


We recall that a state $s$ on a monotone $\sigma$-complete effect algebra $M$ is $\sigma$-additive if $\{a_n\}\nearrow a$ implies $s(a)=\lim_n s(a_n).$ Let $\mathcal S_\sigma(M)$ denote the system of $\sigma$-additive states on $M.$ We recall that there is even a Boolean $\sigma$-algebra which has lot of states but no $\sigma$-additive state, \cite{Sik}.

\begin{theorem}\label{th:5.3}
Let $M$ be a monotone $\sigma$-complete effect algebra with RDP having at least one $\sigma$-additive state and let $(\Omega,\mathcal T,h)$ be the canonical representation of $M$ such that every $f \in \mathcal T$ is $\mathcal B_0(\mathcal T)$-measurable.  There is a sharp observable $\xi$ from $\mathcal B_0(\mathcal T)$ into $M$ such that given Given an observable $x$ on $M,$ $m \in \mathcal S_\sigma(M)$ and $E \in \mathcal B(\mathbb R)$
$$
m(x(E))= \int_\Omega f_E(\omega)\, \dx m\circ \xi(\omega), \eqno(4.1)
$$
where $f_E$ is an arbitrary function from $\mathcal T$ such that $h(f_E)=x(E).$
\end{theorem}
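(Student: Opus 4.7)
The plan is to obtain $\xi$ by restricting the Loomis--Sikorski $\sigma$-homomorphism $h$ to characteristic functions of sets in $\mathcal B_0(\mathcal T)$, and then to derive the integral formula (4.1) by showing that its two sides, viewed as functionals of $f \in \mathcal T$, are $\sigma$-additive states on the effect-tribe $\mathcal T$ that already agree on such characteristic functions.

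First, I define $\xi\colon \mathcal B_0(\mathcal T) \to M$ by $\xi(A) := h(\chi_A)$. By Theorem \ref{th:5.1}, $h$ maps $\mathcal B_0(\mathcal T)$ onto $\Sh(M)$, so $\xi(A) \in \Sh(M)$ and $\xi$ is sharp. Checking that $\xi$ is a $\sigma$-homomorphism is routine: $\xi(\Omega) = h(1) = 1$; for disjoint $A, B \in \mathcal B_0(\mathcal T)$, $\chi_{A\cup B} = \chi_A + \chi_B$ in $\mathcal T$ yields $\xi(A \cup B) = \xi(A) + \xi(B)$; and $A_n \nearrow A$ in $\mathcal B_0(\mathcal T)$ gives $\chi_{A_n} \nearrow \chi_A$ pointwise in $\mathcal T$, whence $\xi(A_n) \nearrow \xi(A)$ since $h$ preserves countable monotone suprema.

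Now fix $m \in \mathcal S_\sigma(M)$ and let $\mu := m \circ \xi$, a probability measure on $(\Omega, \mathcal B_0(\mathcal T))$. Consider the functionals
\[
\phi_1(f) := m(h(f)), \qquad \phi_2(f) := \int_\Omega f\, \dx \mu, \qquad f \in \mathcal T.
\]
Both are $\sigma$-additive states on $\mathcal T$: for $\phi_1$ this is immediate from $h$ being a $\sigma$-homomorphism and $m$ a $\sigma$-additive state, while for $\phi_2$ it is the Monotone Convergence Theorem together with additivity of the Lebesgue integral. Moreover, $\phi_1(\chi_A) = m(\xi(A)) = \mu(A) = \phi_2(\chi_A)$ for every $A \in \mathcal B_0(\mathcal T)$. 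The theorem follows once this equality is propagated from indicators to all of $\mathcal T$ and specialized to $f = f_E$.

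For the propagation I would exploit the $\mathcal B_0(\mathcal T)$-measurability of every $f \in \mathcal T$, together with the RDP of $\mathcal T$ and the lattice operations guaranteed by Proposition \ref{pr:3.3}, to construct an increasing sequence $f_n \in \mathcal T$ with $f_n \nearrow f$ pointwise, built from the indicators of the dyadic level sets $\{\omega\colon f(\omega) \ge k/2^n\} \in \mathcal B_0(\mathcal T)$. On each $f_n$ the equality $\phi_1(f_n) = \phi_2(f_n)$ reduces by finite additivity to the already established indicator case, and $\sigma$-additivity of $\phi_1$ and $\phi_2$ carries the equality to $f$ in the limit. Applying this to $f_E$ gives (4.1); the right-hand side is independent of the choice of representative $f_E$ because $h(\chi_N) = 0$ implies $\mu(N) = m(h(\chi_N)) = 0$, and whenever $h(f_E) = h(f'_E)$ the set $N = \{f_E \neq f'_E\} \in \mathcal B_0(\mathcal T)$ satisfies $h(\chi_N) = 0$ by the regularity of the canonical representation recorded in Theorem \ref{th:5.1}.

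I expect the main technical obstacle to lie precisely in the construction of the simple approximants $f_n$ inside $\mathcal T$: an effect-tribe carries no scalar multiplication, so the naive dyadic staircase $\sum_k (k/2^n)\chi_{A_{n,k}}$ need not belong to $\mathcal T$ even though each $\chi_{A_{n,k}}$ does. The approximants must therefore be assembled from the operations actually available, namely pointwise $\min$, $\max$, and effect-algebra addition on disjoint supports, together with the function $f$ itself. Checking that a suitable assembly stays inside $\mathcal T$ and still converges monotonically to $f$ is where the standing hypothesis that every $f \in \mathcal T$ be $\mathcal B_0(\mathcal T)$-measurable, as well as the lifting techniques used in the proofs of Theorems \ref{th:5.1} and \ref{th:5.2}, will be decisive.
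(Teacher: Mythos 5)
Your skeleton matches the paper's: you define $\xi(A)=h(\chi_A)$, note via Theorem \ref{th:5.1} that $\xi$ is sharp, and aim to identify $m\circ\xi$ with a measure representing $m\circ h$; your argument for independence of the representative $f_E$ (the set $\{f_E\ne f'_E\}$ lies in $\mathcal B_0(\mathcal T)$ by measurability and is $m\circ\xi$-null) is fine, and arguably cleaner than the paper's comparison through $h_E=\max\{f_E,g_E\}$ with $h(h_E-f_E)=0$. The genuine gap is the central step you leave open: the equality $m(h(f))=\int_\Omega f\,\dx (m\circ\xi)$ for \emph{all} $f\in\mathcal T$. The paper does not obtain this by any approximation argument inside $\mathcal T$; it invokes the generalized Butnariu--Klement theorem for effect-tribes all of whose members are $\mathcal B_0(\mathcal T)$-measurable, \cite[Thm 4.4]{Dvu3}, which asserts that every $\sigma$-additive state $s$ on $\mathcal T$ has a (unique) representing probability measure $P_s$ on $\mathcal B_0(\mathcal T)$ with $s(f)=\int_\Omega f\,\dx P_s$; applying this to $s=m\circ h$ and evaluating on characteristic functions identifies $P_m$ with $m\circ\xi$, which is exactly (4.1).

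Your plan to rederive that representation by dyadic staircase approximants is precisely where the proposal breaks down, and you acknowledge it without resolving it: $\mathcal T$ has no scalar multiplication, so neither $(k/2^n)\chi_{A_{n,k}}$ nor partial sums of a binary expansion $\sum_n 2^{-n}\chi_{A_n}$ need belong to $\mathcal T$, and the operations you list ($\min$, $\max$, effect-addition on disjoint supports, truncation by $f$ itself) do not obviously produce an increasing sequence in $\mathcal T$ on which $\phi_1$ can be evaluated and reduced to the indicator case: for instance $\min\{f,\chi_A\}=f\chi_A\in\mathcal T$ gives a decomposition $f=\sum_k f\chi_{B_k}$ over level bands $B_k$, but comparing $m(h(f\chi_{B_k}))$ with $(k/n)\,m(\xi(B_k))$ is not even expressible without passing to the representing po-groups, and making that comparison rigorous is essentially the content of \cite[Thm 4.4]{Dvu3}. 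So as written the analytic core of (4.1) is asserted rather than proved; either cite that theorem, as the paper does, or supply an actual construction replacing it.
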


\begin{proof}
Let $m$ be a $\sigma$-additive state on $M,$ then $m\circ h$ is a $\sigma$-additive state on $\mathcal T.$ By the generalized theorem of Klement and Butnariu holding for effect-tribes with condition that every $f\in \mathcal T$ is $\mathcal B_0(\mathcal T)$-measurable, \cite[Thm 4.4]{Dvu2}, for every $\sigma$-additive state $s$ on $\mathcal T,$ there is a unique probability measure, $P_s,$ on $\mathcal B_0(\mathcal T)$ such that
$$ s(f)= \int_\Omega f(\omega)\, \dx P_s(\omega), \quad f \in \mathcal T. \eqno(4.2)
$$

Let $x:\mathcal B_0(\mathbb R)\to M$ be an observable.  Given $E \in \mathcal B_0(\mathbb R),$ there is an element $f_E \in \mathcal T$ such that $h(f_E)=x(E).$ Using (4.3), given $m \in \mathcal S_\sigma(M),$ there is a unique probability measure $P_m$ on $\mathcal B_0(\mathcal T)$ such that
$$
m(x(E))= m(h(f_E)) = \int_\Omega f_E(\omega)\, \dx P_m(\omega).
$$
We assert the latter integral does not depend on the choice of $f_E.$
Indeed, if $g_E$ is another function from $\mathcal T$ such that $h(g_E)=x(E),$ by (2) of Proposition \ref{pr:3.3}, the function $h_E:= \max\{f_E,g_E\}$ belongs to $\mathcal T$ and $h(h_E)=x(E).$ But $h_E-f_E \in \mathcal T$ and $h(h_E - f_E)=0$ so that
$$
0=m(h(h_E-f_E)) = \int_\Omega (h_E(\omega)-f_E(\omega))\, \dx P_m(\omega).
$$
In the similar way, we have
$$
0=m(h(h_E-g_E)) = \int_\Omega (h_E(\omega)-g_E(\omega))\, \dx P_m(\omega),
$$
whence
$$
\int_\Omega f_E(\omega)\, dP_m(\omega)= \int_\Omega g_E(\omega)\, \dx P_m(\omega).
$$

We assert that $P_m = m\circ h.$ Indeed,  let $f=\chi_A,$ $A\in \mathcal T.$ Then by (4.2),
$$
m(h(\chi_A))= \int_\Omega\chi_A(\omega)\, \dx P_m(\omega) =P_m(A).
$$

The mapping $\xi: \mathcal B_0(\mathcal T) \to M$ defined by $\xi(A):=h(\chi_A),$ $A \in \mathcal B_0(\mathcal T),$ is a sharp observable on $M.$
\end{proof}

Commenting Theorem \ref{th:5.3}, we say that the observable $x$ on $M$ is a {\it smearing} of a sharp observable $\xi$. This result extends an analogous result for $\sigma$-lattice effect algebras, see \cite[Thm 3.4]{JPV1}.

\begin{remark}\label{rem}
Let $(\Omega,\mathcal T,h)$ be a regular representation of a monotone $\sigma$-complete effect algebra with RDP such that all conditions of Theorem {\rm \ref{th:5.2}} are satisfied.  Then Theorem {\rm \ref{th:5.3}} holds also for our case of $(\Omega,\mathcal T, h).$
\end{remark}

\begin{proof}
It follows the same steps as the proof of Theorem \ref{th:5.3}.
\end{proof}

\begin{theorem}\label{th:5.4}
Let $M$ be a monotone $\sigma$-complete effect algebra with RDP and let $(\Omega,\mathcal T,h)$ be the canonical representation of $M$ such that every $f \in \mathcal T$ is $\mathcal B_0(\mathcal T)$-measurable. Given $a\in M$, there is a mapping $\Lambda_a: \mathcal B_0([0,1]) \to \mathcal \Sh(M)$ such that the mapping $a\mapsto \Lambda_a$ is injective, and for every $\sigma$-additive state $m$ on $M$, we have

$$
m(a)=\int_0^1 \lambda \, \dx m(\Lambda_a(\lambda)). \eqno(4.3)
$$

\end{theorem}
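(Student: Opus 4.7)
The plan is to lift $a$ to a function $f_a \in \mathcal T$ with $h(f_a)=a$ and then let $\Lambda_a$ be the pushforward of the characteristic-function calculus on $\mathcal B_0(\mathcal T)$ along $f_a$. Concretely, since $f_a$ is $\mathcal B_0(\mathcal T)$-measurable with values in $[0,1]$, set
\[
\Lambda_a(E) := h(\chi_{f_a^{-1}(E)}), \quad E \in \mathcal B_0([0,1]).
\]
Theorem~\ref{th:5.1} gives $\Lambda_a(E) \in \Sh(M)$. That $\Lambda_a$ is a sharp observable (i.e.\ a $\sigma$-homomorphism) is routine bookkeeping: $f_a^{-1}$ commutes with disjoint unions and with countable monotone unions, $\chi_{A\cup B} = \chi_A + \chi_B$ when $A\cap B=\emptyset$, $\chi_{\bigcup_n A_n} = \bigvee_n \chi_{A_n}$ pointwise for $A_n \nearrow$, and $h$ is a $\sigma$-homomorphism; moreover $\Lambda_a([0,1]) = h(\chi_\Omega) = h(1) = 1$.

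Both well-definedness (independence from the lifting $f_a$) and injectivity then follow from the meager-ideal description of $\sim$ in the canonical representation. If $h(f_a) = h(f_a')$, then $N(f_a-f_a')$ is meager; since $\chi_{f_a^{-1}(E)}(s) \ne \chi_{f_a'^{-1}(E)}(s)$ forces $f_a(s) \ne f_a'(s)$, one has $N(\chi_{f_a^{-1}(E)} - \chi_{f_a'^{-1}(E)}) \subseteq N(f_a - f_a')$ meager, so the two characteristic functions represent the same element of $M$. For injectivity, suppose $\Lambda_a = \Lambda_b$ with liftings $f_a,f_b$. For each rational $q \in [0,1]$ the set $N(\chi_{\{f_a\le q\}} - \chi_{\{f_b\le q\}})$ is meager; the countable union over rational $q$ remains meager, and off this union $f_a(s) \le q \iff f_b(s) \le q$ for every rational $q$, which forces $f_a(s)=f_b(s)$. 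Hence $f_a \sim f_b$, so $a = h(f_a) = h(f_b) = b$.

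For the integral formula (4.3), I invoke the Butnariu--Klement representation used in the proof of Theorem~\ref{th:5.3}: the $\sigma$-additive state $m\circ h$ on $\mathcal T$ produces a unique probability measure $P_m$ on $\mathcal B_0(\mathcal T)$ such that $m(h(f)) = \int_\Omega f\,\dx P_m$ for every $f \in \mathcal T$, and in particular $P_m(A) = m(h(\chi_A))$ for $A \in \mathcal B_0(\mathcal T)$. Taking $f = f_a$ and applying the standard change-of-variables formula,
\[
m(a) = \int_\Omega f_a\,\dx P_m = \int_0^1 \lambda\,\dx (P_m \circ f_a^{-1})(\lambda),
\]
and the identity $(P_m \circ f_a^{-1})(E) = P_m(f_a^{-1}(E)) = m(h(\chi_{f_a^{-1}(E)})) = m(\Lambda_a(E))$ identifies the pushforward with $m\circ\Lambda_a$, yielding (4.3).

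The only mildly delicate step is the independence of $\Lambda_a$ from the choice of lifting $f_a$, which crucially uses the meager-ideal structure of the canonical $\sim$; everything else reduces to preimage bookkeeping and the already-proved Loomis--Sikorski and Butnariu--Klement machinery. An analogous theorem for a regular representation in the sense of Remark~\ref{rem} would require replacing the countable-union-of-meager-sets argument in the injectivity step with the corresponding closure property of the ideal $\mathcal I_{\Omega_0}$.
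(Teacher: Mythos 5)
Your proof is correct, and its skeleton coincides with the paper's: the same definition $\Lambda_a(E)=h(\chi_{f_a^{-1}(E)})$, the same injectivity argument via rational cuts and meager symmetric differences, and the same derivation of (4.3) from the integral representation of $\sigma$-additive states on $\mathcal T$ (\cite[Thm 4.4]{Dvu3}, identification $P_m=m_h$, then change of variables). Where you genuinely diverge is the well-definedness of $\Lambda_a$. The paper chooses a common majorant $k\in\mathcal T$ of the two liftings, proves $h(\chi_{f^{-1}(E)})=h(\chi_{k^{-1}(E)})$ first for intervals $E=[0,t)$ (where $f\le k$ makes the relevant difference of characteristic functions nonnegative, hence an element of $\mathcal T$), and then extends to all Borel sets by a Dynkin-system ($\pi$--$\lambda$) argument. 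You instead exploit the canonical $\sim$ directly: $h(f_a)=h(f_a')$ forces $N(f_a-f_a')$ to be meager, and the inclusion $f_a^{-1}(E)\,\Delta\, f_a'^{-1}(E)\subseteq\{f_a\ne f_a'\}$ gives $N(\chi_{f_a^{-1}(E)}-\chi_{f_a'^{-1}(E)})$ meager for every Borel $E$ at once; since both characteristic functions belong to $\mathcal T$ by the measurability hypothesis, they are $h$-equivalent. This is more elementary --- it needs neither the majorant $k$ nor the Dynkin-system step --- and the same observation also tightens the injectivity argument. Its only cost is that it leans on the explicit meager-ideal description of the canonical representation, but the paper's route has the same dependence, and, as you note, a regular representation would simply replace meagerness by membership in $\mathcal I_{\Omega_0}$. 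No gaps; the remaining bookkeeping (sharpness of the values via Theorem \ref{th:5.1}, $\sigma$-homomorphism property of $\Lambda_a$) is handled at least as carefully as in the paper.
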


\begin{proof}
Let $(\Omega,\mathcal T,h)$ be the canonical representation of $M.$ Given $a \in M$,  choose a function $f=f_a \in \mathcal T$ such that $h(f)=a.$ For any Borel set $E \in \mathcal B_0([0,1]),$ let

$$
\Lambda_a(E):= h(\chi_{f_a^{-1}(E)}).\eqno(4.4)
$$
Then $\Lambda_a$ is by Theorem \ref{th:5.1} an observable on $\Sh(M).$

Assume that $g\in \mathcal T$ is another function such that $h(g)=a.$  As in the proof of Theorem \ref{th:5.3}, we can find a function $k \in \mathcal T$ such that $h(k)=a$ and $f,g \le k.$  Then $N(k-g)$ and $N(k-f)$ are meager sets, $k-f, k-g \in \mathcal T,$  and $\chi_{f^{-1}(E)} - \chi_{k^{-1}(E)} \in \mathcal T,$ $\chi_{g^{-1}(E)} - \chi_{k^{-1}(E)} \in \mathcal T$ for any $E=[0,t),$ $0\le t \le 1.$ In addition, $\chi_{f^{-1}(E)} - \chi_{k^{-1}(E)} \sim 0,$ $\chi_{g^{-1}(E)} - \chi_{k^{-1}(E)}\sim 0$ for any $E=[0,t),$ $0\le t \le 1.$ Therefore, $h(\chi_{f^{-1}(E)}) = h(\chi_{k^{-1}(E)})$ for every $E=[0,t).$  Let $\mathcal K$ be the set of Borel subsets $E$ from $[0,1]$ such that $h(\chi_{f^{-1}(E)}) = h(\chi_{k^{-1}(E)}).$ It is a Dynkin system, i.e. a system of subsets containing its universe  which is closed under the set theoretical complements and countable unions of of disjoint subsets.

The system $\mathcal K$ contains all intervals $(-\infty, t)\cap [0,1]$ for $t \in \mathbb R,$ all intervals of the form $[a,b)\cap [0,1],$ $a\le b$, as well as all finite unions of such  disjoint intervals $\bigcup_{i=1}^n [a_i,b_i)\cap [0,1].$  Because any finite union of intervals $\bigcup_{j=1}^m [c_j,d_j)\cap [0,1]$ can be expressed as a finite union of disjoint intervals, $\mathcal K$ contains also such unions. Therefore, if $E$ and $F$ are two finite unions of intervals, so is its intersection. Hence, by \cite[Thm 2.1.10]{Dvu}, $\mathcal K$ is also a $\sigma$-algebra, and finally we have $\mathcal K=\mathcal B_0([0,1]).$ In the same way, $h(\chi_{g^{-1}(E)}) = h(\chi_{k^{-1}(E)})$ and, consequently, $h(\chi_{f^{-1}(E)}) = h(\chi_{g^{-1}(E)})$ for every $E \in \mathcal B_0([0,1]).$

Consequently, we have proved that $\Lambda_a$ in (4.4) does not depend on the choice of $f.$

Now assume that $\Lambda_a = \Lambda_b$ for some $a,b \in M$ and let $h(f)=a$ and $h(g)=b$ for some $f,g \in \mathcal T.$ Then for every $E \in \mathcal B_0([0,1]),$ $h(\chi_{f^{-1}(E)}) = h(\chi_{g^{-1}(E)}).$

\begin{eqnarray*}
N(f-g )&=& \{s\in \mathcal S_\partial (M): f(s)\ne g(s)\}
= \bigcup_{r \in \mathbb Q}\{s\in \mathcal S_\partial (M): f(s)< r < g(s)\}\\
&=& \bigcup_{r \in \mathbb Q}(f^{-1}([0,r))\cap g^{-1}([r,1])) = \bigcup_{r \in \mathbb Q} f^{-1}([0,r))\Delta\, g^{-1}([0,r)),
\end{eqnarray*}
where $\Delta$ denotes the symmetric difference of two sets.
Since $h(\chi_{f^{-1}([0,r))}) = h(\chi_{g^{-1}([0,r))})$ for every rational $r \in [0,1],$ $N(\chi_{f^{-1}([0,r))} - \chi_{g^{-1}([0,r))})$ is a meager set. But $N(\chi_{f^{-1}([0,r))} - \chi_{g^{-1}([0,r))}) = f^{-1}([0,r))\Delta\, g^{-1}([0,r)),$ so that $N(f-g)$ is a meager set, and hence $a=h(f)=h(g)=b.$

Now let $m$ be an arbitrary $\sigma$-additive state on $M.$ It is clear that the mapping $m\circ h$ is a $\sigma$-additive state on $\mathcal T,$ and $m_h: \mathcal B_0(\mathcal T) \to [0,1],$ defined by $m_h(A)=m(h(\chi_A)),$ $A \in \mathcal B_0(\mathcal T),$ is a probability measure on $\mathcal B_0(\mathcal T).$ Given an element $a \in M$, there is an element $f_a \in \mathcal T$ such that $h(f_a)=a.$ Whence, the mapping $E \mapsto m(h(\chi_{f_a^{-1}(E)}),$ $ E\in \mathcal B_0([0,1]),$ is a probability measure on $\mathcal B_0([0,1]).$ By \cite[Thm 4.4]{Dvu3}, there is a unique probability measure $P_m$ on $\mathcal B_0(\mathcal T)$ such that
$$
m(h(f)) = \int_\Omega f(\omega) \, \dx P_m(\omega),\quad f \in \mathcal T.
$$
On the other hand, if $A \in \mathcal B_0(\mathcal T),$ then

$$
m(h(\chi_A)) =\int_\Omega \chi_A(\omega)\, \dx P_m(\omega) = \int_\Omega \chi_A(\omega) \, \dx m_h(\omega).
$$
Since every $f \in \mathcal T$ is $\mathcal B_0(\mathcal T)$-measurable, we have by \cite[Thm 4.4]{Dvu3}, that $P_m(A)=m_h(A)$ for every $A \in \mathcal B_0(\mathcal T).$

This yields for $f = f_a$

$$
m(h(f_a))= \int_\Omega f_a(\omega)\dx m_h(\omega)= \int_0^1 \lambda \, \dx m(h(\chi_{f_a^{-1}(\lambda)})) = \int_0^1 \lambda \, \dx m(\Lambda_a(\lambda)),
$$
and
$$
m(a) = \int_0^1 \lambda \, \dx m(\Lambda_a(\lambda))
$$
which proves (4.3).
\end{proof}

Theorem \ref{th:5.4} generalizes an analogous result from \cite{Pul}, and  the mapping $a \mapsto \Lambda_a$ is said to be the {\it spectral measure} of the element $a.$

\begin{remark}\label{rem1}
Let the conditions of Theorem {\rm \ref{th:5.4}} be satisfied.

\noindent {\rm (1)} If $a \in \Sh(M),$ then for $\Lambda_a$ defined by {\rm (4.4)}, we have
$$
\Lambda_a(E)=\left\{\begin{array}{lll}
a &\mbox{if}\ &   0\notin E,\ 1\in E,\\
a' &\mbox{if}\ &  0\in E, \ 1\notin E,\\
0 &\mbox{if}\   & 0,1 \notin E,\\
1 &\mbox{if}\   & 0,1 \in E,
\end{array}
\right.
\quad E \in \mathcal B_0([0,1]).
$$

{\rm (2)}  Let $\phi:[0,1]\to [0,1]$ be a strictly increasing and surjective Borel-measurable function such that $\phi(0)=0$ and $\phi(1)=1.$ If we define $\phi(\Lambda_a):\mathcal B_0([0,1])\to \Sh(M),$  $a \in M,$ by  $\phi(\Lambda_a)(E):=\Lambda_a(\phi^{-1}(E)),$ $E\in \mathcal B_0([0,1]),$ then the mapping  $a\mapsto \phi(\Lambda_a)$ is injective, but $\phi(\Lambda_a)$ is not necessarily a spectral measure because not always
$$
m(a)=\int_0^1 \phi(\lambda) \dx m(\Lambda_a(\lambda)).
$$

\end{remark}

\begin{proof} (1) Let $a \in \Sh(M).$ By Theorem \ref{th:5.1}, there is $A \in \mathcal B_0(\mathcal T)$ such that $h(\chi_A)=a.$ By (4.4), we have
$\Lambda_a(E)= h(\chi_{\chi_A^{-1}(E)}),$ for $E \in \mathcal B_0([0,1]).$ If $E =\{1\},$ then $\Lambda_a(\{1\}= h(\chi_A)=a.$ Similarly for other Borel sets $E.$

(2) Let $a \in M$ and let $h(f)=a.$  Using (4.4), we have $\phi(\Lambda_a)(E)= \Lambda_a(\phi^{-1}(E))=h(\chi_{f^{-1}(\phi^{-1}(E))}) = h(\chi_{(\phi\circ f)^{-1}(E)}). $ Assume now $\phi(\Lambda_a) = \phi(\Lambda_b)$ for some $b$ and let $h(g)=b.$ Then $h(\chi_{(\phi\circ f)^{-1}(E)}) = h(\chi_{(\phi\circ g)^{-1}(E)}), $ $E \in \mathcal B_0([0,1]).$
Similarly as in the proof of Theorem \ref{th:5.4}, we have that the set

\begin{eqnarray*}
N(f-g )&=& \{s\in \mathcal S_\partial (M): f(s)\ne g(s)\}\\
&=& \bigcup_{r \in \mathbb Q}\{s\in \mathcal S_\partial (M): f(s)< r < g(s)\}\\
&=& \bigcup_{r \in \mathbb Q}(f^{-1}(\phi^{-1}([0,\phi(r))))\cap g^{-1}(\phi^{-1}([\phi(r),1])))\\
& = &\bigcup_{r \in \mathbb Q} f^{-1}(\phi^{-1}([0,\phi(r))))\Delta\, g^{-1}(\phi^{-1}([0,\phi(r)))).
\end{eqnarray*}
Hence, we have $h(\chi_{f^{-1}(\phi^{-1}([0,\phi(r))))}) = h(\chi_{g^{-1}(\phi^{-1}([0,\phi(r))))})$  and $N(\chi_{f^{-1}(\phi^{-1}([0,\phi(r))))} - \chi_{g^{-1}(\phi^{-1}([0,\phi(r))))})$ is a meager set. But
 $$
N(\chi_{f^{-1}(\phi^{-1}([0,\phi(r))))} - \chi_{g^{-1}(\phi^{-1}([0,\phi(r))))}) = f^{-1}(\phi^{-1}([0,\phi(r))))\Delta\, g^{-1}(\phi^{-1}([0,\phi(r)))),
$$ so that $N(f-g)$ is a meager set, and hence $a=h(f)=h(g)=b.$

On the other hand, for any $\sigma$-additive state $m$ on $M,$ we have $  \int_0^1 \lambda \dx m(\phi(\Lambda_a(\lambda)))=\int_0^1 \phi(\lambda) \dx m(\Lambda_a(\lambda))$ which is not necessarily equal $m(a).$
\end{proof}

We note that we do not know whether the spectral measure is unique.

\begin{theorem}\label{th:5.5}
Let $M$ be a monotone $\sigma$-complete effect algebra with RDP and let $(\Omega,\mathcal T,h)$ be the canonical representation of $M$ such that every $f \in \mathcal T$ is $\mathcal B_0(\mathcal T)$-measurable.   Then every $\sigma$-additive state $m$ on the Boolean $\sigma$-algebra $\Sh(M)$ can be uniquely extended to a $\sigma$-additive state $\hat m$ on $M.$ In addition,
$$ \hat m(a)= \int_0^1 \lambda \dx m(\Lambda_a(\lambda)), \quad a \in M. \eqno(4.5)
$$
\end{theorem}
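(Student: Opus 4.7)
The plan is to construct $\hat m$ by first pulling $m$ back to a probability measure on the underlying $\sigma$-algebra $\mathcal B_0(\mathcal T)$ and then integrating representatives. By Theorem \ref{th:5.1}, $h$ maps $\mathcal B_0(\mathcal T)$ onto $\Sh(M)$, so I will define $P_m(A) := m(h(\chi_A))$ for $A \in \mathcal B_0(\mathcal T)$ and verify it is a probability measure using the $\sigma$-additivity of $m$ together with the fact that $h$, being a $\sigma$-homomorphism, preserves monotone countable unions.

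With $P_m$ in hand, I will set
$$\hat m(a) := \int_\Omega f_a(\omega) \dx P_m(\omega),$$
for any representative $f_a \in \mathcal T$ with $h(f_a) = a$. Well-definedness is the first real issue: given two representatives $f,g$ of $a$, I will consider $k = \max\{f,g\}$, which lies in $\mathcal T$ by the argument in the proof of Proposition \ref{pr:3.3} and satisfies $h(k) = a$; then $k-f,\, k-g \in \mathcal T$ with $h(k-f) = h(k-g) = 0$, and Theorem \ref{th:5.1} forces $h(\chi_{N(k-f)}) = 0$. A short comparison then shows $\{f \neq g\} \in \mathcal B_0(\mathcal T)$ is $P_m$-null, so the integral does not depend on the choice of representative. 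Afterwards I will verify that $\hat m$ is a $\sigma$-additive state extending $m$: normalization is trivial; extension holds because every $a \in \Sh(M)$ has a representative of the form $\chi_A$ for some $A \in \mathcal B_0(\mathcal T)$ (again Theorem \ref{th:5.1}); and monotone $\sigma$-additivity follows from the Monotone Convergence Theorem, after using Proposition \ref{pr:3.3}(i) inductively to build a pointwise monotone sequence of representatives of $a_n \nearrow a$.

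The trickiest step will be ordinary additivity. Given $a + b$ defined in $M$ with arbitrary representatives $f_a, f_b \in \mathcal T$, the pointwise sum $f_a + f_b$ need not lie in $[0,1]^\Omega$, so I will replace $f_a$ by $\tilde f_a := \min\{f_a,\, 1 - f_b\} \in \mathcal T$. Since $a \leq 1 - b$ in $M$, the value $h(\tilde f_a) = h(f_a) \wedge h(1-f_b) = a \wedge (1-b) = a$ is unchanged, while now $\tilde f_a + f_b \leq 1$ pointwise, lies in $\mathcal T$, and represents $a+b$; linearity of the integral then yields $\hat m(a+b) = \hat m(a) + \hat m(b)$.

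Finally, both the integral formula (4.5) and uniqueness will drop out of Theorem \ref{th:5.4}. Applying that theorem to the $\sigma$-additive state $\hat m$ just constructed gives $\hat m(a) = \int_0^1 \lambda \dx \hat m(\Lambda_a(\lambda))$, and because $\Lambda_a(E) \in \Sh(M)$ the Stieltjes measures $\hat m \circ \Lambda_a$ and $m \circ \Lambda_a$ coincide on $\mathcal B_0([0,1])$, which establishes (4.5). For uniqueness, any other $\sigma$-additive extension $m'$ of $m$ to $M$ likewise satisfies $m'(a) = \int_0^1 \lambda \dx m'(\Lambda_a(\lambda)) = \int_0^1 \lambda \dx m(\Lambda_a(\lambda)) = \hat m(a)$, completing the proof.
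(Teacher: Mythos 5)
Your construction of $\hat m$ is essentially the paper's: you pull $m$ back to the probability measure $P_m(A)=m(h(\chi_A))$ on $\mathcal B_0(\mathcal T)$ (the paper's $m_h$), define $\hat m(a)=\int_\Omega f_a\,\dx P_m$ for a representative $f_a$ of $a$, and check well-definedness via a common dominating representative $k=\max\{f,g\}$ together with Theorem \ref{th:5.1}, $\sigma$-additivity via monotone representatives and monotone convergence, and the extension property via characteristic-function representatives of sharp elements. Where you genuinely diverge is at the end: you obtain both (4.5) and uniqueness by applying Theorem \ref{th:5.4} to $\hat m$ and to any other $\sigma$-additive extension $m'$, using that $\Lambda_a$ takes values in $\Sh(M)$, where every extension agrees with $m$; the paper instead proves uniqueness by representing the states $m_i\circ h$ on $\mathcal T$ as integrals against probability measures on $\mathcal B_0(\mathcal T)$ (via \cite[Thm 4.4]{Dvu3}) and gets (4.5) from the integral transformation theorem. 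Your route is legitimate and arguably cleaner, since the spectral formula (4.3) already says that a $\sigma$-additive state on $M$ is determined by its restriction to $\Sh(M)$; the paper's route has the merit of not depending on Theorem \ref{th:5.4} at all.

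One step, however, is misjustified as written: in the additivity argument you claim $h(\tilde f_a)=h(f_a)\wedge h(1-f_b)=a\wedge(1-b)=a$ for $\tilde f_a=\min\{f_a,1-f_b\}$. The effect algebra $M$ need not be a lattice, and even where meets exist the $\sigma$-homomorphism $h$ has no reason to carry pointwise minima to meets, so this identity cannot be invoked as a general principle. The conclusion you need, namely that $\tilde f_a\in\mathcal T$ and $h(\tilde f_a)=a$, is nevertheless correct: it is exactly what Proposition \ref{pr:3.3}(ii) yields with $f=0$, $g=1-f_b$ and $c=a$ (note $h(1-f_b)=1-b\ge a$), and indeed $\max\{0,\min\{1-f_b,f_a\}\}=\tilde f_a$ is the interpolating function produced in its proof; alternatively one can argue directly in the canonical representation, since $\hat a\le\widehat{b'}$ everywhere and hence $\min\{f_a,1-f_b\}\sim a$ modulo a meager set. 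With that repair (which is the same device the paper uses, quoting Proposition \ref{pr:3.3}(ii) to choose representatives $f\le 1-g$), your proof is complete.
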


\begin{proof}
{\it Existence.} Let $m$ be a $\sigma$-additive state on $\Sh(M).$  Then the mapping $m_h(A):=m(h(\chi_A)),$ $A \in \mathcal B_0(\mathcal T),$ is due to Theorem \ref{th:5.1} a $\sigma$-additive measure on $\mathcal B_0(\mathcal T).$ Then
the mapping $s_m:\mathcal T \to [0,1]$ defined by
$$
s_m(f):= \int_\Omega f(\omega)\, \dx m_h(\omega), \quad f \in \mathcal T,
$$
is a $\sigma$-additive state on $\mathcal T.$ Now we define a function $\hat m: M \to [0,1]$ via $\hat m(a)=s_m(f)$ whenever $h(f)=a.$ We claim that $\hat m$ is defined correctly. Indeed, if $h(g)=a,$  by (2) of Proposition \ref{pr:3.3}, there is a function $k \in \mathcal T$ such that $h(k)=a$ and $f,g \le k.$  Therefore, $N(k-f)$ and $N(k-g)$ are meager sets, and $k-f, k-g \in \mathcal T.$ Then
$$
s_m(k-f)=\int_\Omega (k(\omega)-f(\omega))\dx m_h(\omega)=0= \int_\Omega (k(\omega)-g(\omega))\dx m_h(\omega),
$$
which entails $ \int_\Omega f(\omega)\dx m_h(\omega)= \int_\Omega g(\omega)\dx m_h(\omega).$

Assume $a+b$ is defined in $M,$ again by by (2) of Proposition \ref{pr:3.3}, we can assume that we have two functions $f,g\in \mathcal T$ such that $f\le 1-g,$ and $h(f)=a,$ $h(g)=b.$  Then $h(f+g)=a+b$ and
$$
\hat m(a+b):= \int_\Omega (f(\omega)+g(\omega)) \dx m_h(\omega)= \hat m(a)+ \hat m(b).
$$
It is clear that $\hat m$ is a state on $M.$  To show that $\hat m$ is $\sigma$-additive, assume $\{a_n\}\nearrow a$ in $M.$ Using by (2) of Proposition \ref{pr:3.3} and mathematical induction, we can assume that we have find a monotone sequence, $\{f_n\},$ of elements of $\mathcal T$ such that $h(f_n)=a_n$ and $h(a) = h(\bigvee_n f_n)= \bigvee_n a_n =a.$  Therefore,
$$
\hat m(a) = \int_\Omega f(\omega) \dx m_h(\omega) = \lim_n \int_\Omega f_n(\omega) \dx m_h(\omega) = \lim_n \hat m(a_n),
$$
which proves $\hat m$ is a $\sigma$-additive state on $M.$  Let now $a \in \Sh(M).$  By Theorem \ref{th:5.1}, there is $A \in \mathcal B_0(\mathcal T)$ such that $h(\chi_A)=a.$ Then
$$
\hat m(a) = \int_\Omega \chi_A \dx m_h(\omega)=m_h(A) = m(h(\chi_A))=m(a),
$$
which says that $\hat m$ is an extension of $m$ onto $M.$

\vspace{2mm} {\it Uniqueness.} Let $m_1$ and $m_2$ be two $\sigma$-additive extensions of $m$ onto the whole $M$.  Define $s_i(f):=m_i(h(f)),$ $f \in \mathcal T,$ $i=1,2.$  Then $s_i$ is a $\sigma$-additive state on $\mathcal T$ such that $s_1(\chi_A) =m(h(\chi_A))=s_2(\chi_A).$  By \cite[Thm 4.4]{Dvu3}, there are two probability measures $P_1$ and $P_2$ on $\mathcal B_0(\mathcal T)$ such that
$$
s_i(f)=\int_\Omega f(\omega) \dx P_i(\omega),\quad f \in \mathcal T.
$$
Then $s_1(\chi_A) =P_1(A)=m(h(\chi_A))=P_2(A) = s_2(\chi_A).$ Therefore, $s_1(f)=s_2(f)$ and if given $a\in M$, $h(f)=a$ for some $f \in \mathcal T,$ $m_1(a)=m_2(a).$

Finally, let $a \in M$ be given.  Choose $f \in \mathcal T$ such that $h(f)=a.$ Then using the Integral Transformation Theorem \cite{Hal}, we have

$$
m(a) =\int_\Omega f(\omega) \dx m_h(\omega)= \int_0^1 \lambda \dx m_h(f^{-1}(\lambda)) = \int_0^1 \lambda \dx m(\Lambda_a(\lambda)),
$$
when we have used (4.3).
\end{proof}

We recall that Theorem \ref{th:5.5} cannot be, in general, extended for any (finitely additive) state on $\Sh(M)$ of a monotone $\sigma$-complete effect algebra $M$ with RDP because this is possible iff $M$ is an MV-algebra as it was proved in \cite[Thm 5.1]{Dvu3}.

\end{document}